\newcommand{\bbN}{{\mathbb{N}}}
\newcommand{\bbE}{{\mathbb{E}}}
\newcommand{\bbP}{{\mathbb{P}}}
\newcommand{\bbR}{{\mathbb{R}}}
\DeclareMathOperator{\var}{Var}
\newcommand{\setU}{{\mathfrak{U}}}
\newcommand{\bsY}{{\boldsymbol{Y}}}
\newcommand{\calF}{{\mathcal{F}}}
\newcommand{\calS}{{\mathcal{S}}}
\newcommand{\calO}{{\mathcal{O}}}
\DeclareMathOperator{\cov}{Cov}
\providecommand{\argmin}{\operatorname*{argmin}}
\begin{document}

\title*{Sequential Estimation using Hierarchically Stratified Domains with Latin Hypercube Sampling}

\titlerunning{SS-LHS-gPC} 
\author{Sebastian Krumscheid and Per Pettersson}

\institute{Sebastian Krumscheid \at Karlsruhe Institute of Technology, 76131 Karlsruhe, Germany \email{krumscheid@kit.edu}
\and Per Pettersson \at NORCE Norwegian Research Centre, N-5838 Bergen, Norway \email{pepe@norceresearch.no}}
%
%
\maketitle

\abstract{%
Quantifying the effect of uncertainties in computationally complex systems where only point evaluations in the stochastic domain but no regularity conditions are available is limited to sampling-based techniques. This work presents an adaptive sequential stratification estimation method that uses Latin Hypercube Sampling within each stratum. The adaptation is achieved through a sequential hierarchical refinement of the stratification, guided by previous estimators using local (i.e., stratum-dependent) variability indicators based on generalized polynomial chaos expansions and Sobol decompositions. For a given total number of samples $N$, the corresponding hierarchically constructed sequence of Stratified Sampling estimators combined with Latin Hypercube sampling is adequately averaged to provide a final estimator with reduced variance. Numerical experiments illustrate the procedure's efficiency, indicating that it can offer a variance decay proportional to $N^{-2}$ in some cases.}

\section{Introduction}
\label{KrPe_sec:intro}

In many applications, including problems in flood modelling~\cite{Hajihassanpour_etal_23} and large-scale CO$_2$ storage simulations~\cite{Gasda_etal_22}, models that describe complex phenomena are subject to uncertainties. Often, these models have a black-box character, in the sense that the model can only be evaluated pointwise in the stochastic domain without any other information regarding the mapping of the stochastic input to a model output. A reliable assessment of the effects of uncertainties on system outputs is then only possible via sampling-based methods. In some cases, e.g., in multi-physics problems or when modeling fractured porous media, the model structure, and its complexity prohibit using a hierarchy of different degrees of fidelity. In these cases, efficient estimators that use different fidelity models to systematically reduce the variance of estimators of statistics of a random system's output, such as multi-fidelity~\cite{Peherstorfer_etal_18} and multi-level techniques~\cite{Giles_15}, cannot be used. Other sampling-based variance reduction techniques, which are applicable in these situations, are methods that use a partition of the stochastic domain. Procedures such as Stratified Sampling~\cite{Asmussen_Glynn_07,Kroese_etal_11}, which are typically restricted to low dimensions, and the related Latin Hypercube Sampling for higher dimensions~\cite{McKay_etal_79,Stein_87}, have the potential to reduce the variance compared to a standard Monte Carlo (SMC) estimator when using the same number of samples. Indeed, these techniques can be very effective when the partition of the stochastic domain is aligned with the random system's high variability region, which has led to recent developments on adaptive procedures that guide the partition of the stochastic domain; see, e.g., \cite{Etore_etal_11, Shields_18, KrPe_ref:PeKr_22, Song_Kawai_22}.

Aside from the challenges mentioned above, evaluating a model can often be computationally expensive. In these cases, the cost of repeatedly evaluating the model is significantly more expensive than the costs of post-processing obtained samples to guide adaptive sampling strategies. Here, we introduce an adaptive sampling-based estimator that uses a combination of Stratified Sampling and Latin Hypercube Sampling (LHS) specifically designed for such problems where generating samples is computationally expensive.
In contrast to previous works where the adaptation is achieved through a sequential sampling approach, a sequential approach is not easily achieved for LHS, as one would lose favorable LHS properties after adding samples. We remedy this by introducing a sequential estimation approach, where we construct a sequence of Stratified Sampling estimators, each using its own LHS. 
For example, let $N = \sum_{\ell=1}^L N^{(\ell)}$ be the total number of samples aggregated from $L$ batches of sizes $N^{(\ell)}$, and denote by $\hat{\mu}^{\text{Strat}}_{\text{LHS},\ell}$ the Stratified Sampling estimator using an LHS sample of size $N^{(\ell)}$.
The final estimator is then obtained through an optimally weighted average of the sequence of estimators $\hat{\mu}^{\text{Strat}}_{\text{LHS},\ell}$, $\ell = 1,\dots, L$. In this approach, adaptivity for the partition of the stochastic domain is realized by a hierarchical refinement strategy based on a local sensitivity analysis of the current estimator in the sequence before computing the next one. 
Indeed, given an estimator $\hat{\mu}^{\text{Strat}}_{\text{LHS},\ell}$, the choice of stratification used for the next estimator $\hat{\mu}^{\text{Strat}}_{\text{LHS},\ell+1}$ can be informed by, e.g., local variance contributions from the current estimator $\hat{\mu}^{\text{Strat}}_{\text{LHS},\ell}$.
This can, for example, be done via a Sobol decomposition~\cite{Sobol_93}, which, however, is computationally expensive if performed via numerous distinct Monte Carlo integrals. By computing a generalized polynomial chaos expansion of the function of interest, the Sobol sensitivity indices can efficiently be obtained via a fast post-processing procedure~\cite{Sudret_08}, leading to a computationally efficient adaptive estimation procedure.
An implementation of the proposed methodology is available at \url{https://github.com/massperp/Sequential-Stratified-LHS}.

An adaptive LHS procedure for  sequentially adding new samples to a domain while preserving LHS properties was proposed in~\cite{Tong_06}. Another related work is~\cite{Shields_16}, where a sampling approach is introduced that adds new samples sequentially to jointly LHS and Stratified Sampling designs. These approaches are distinctly different from the one proposed in this paper, where we consider a sequence of estimators rather than sequential sampling within a single estimator, and multi-sample LHS within a sequentially refined stratification design instead of a joint LHS and Stratified Sampling domain.

The rest of the paper is structured as follows. Sect.~\ref{sec:basics} describes the required basics of both Stratified Sampling and Latin Hypercube Sampling. Sect.~\ref{sec:adaptive:proc} then describes the novel hierarchical procedure, detailing the sequential estimation approach, and the identification of local variability indicators that guide the hierarchical adaptation. Numerical experiments are reported in Sect.~\ref{sec:numerics}. These examples indicate, in particular, that the novel method is superior to both Monte Carlo and Latin Hypercube Sampling and can even achieve a variance of order $N^{-2}$ in some cases, where $N$ is the sample size. Finally, a conclusion is offered in Sect.~\ref{sec:conclusion}.


\section{Combining Stratified Sampling and Latin Hypercube Sampling}
\label{sec:basics}
We consider computing $\mu = \bbE{\bigl(f(\bsY)\bigr)}$ for a uniformly distributed $d$-dimensional random vector
$\bsY = (Y_1,\dots, Y_d)\sim \text{Uni}(\setU)$, $\setU = [0, 1]^d$,  on some probability space $(\Omega,\calF,\bbP)$ and  a given scalar-valued measurable function $f\colon\bbR^d\to\bbR$, which is such that $f(\bsY)$ has finite variance. Throughout this work, we assume that $f$ is a complex computational model that may be expensive to evaluate and which we can only access through point evaluations. In particular, we do not assume that derivative information on $f$ is available. Consequently, the goal is to efficiently approximate $\mu$ using a sampling-based procedure. Here, we will combine Stratified Sampling and Latin Hypercube sampling in an adaptive procedure. Before we describe the adaptive sampling principle, we first present the basic building blocks required to blend Stratified Sampling and Latin Hypercube Sampling.

\subsection{Stratified Sampling}
\label{sec:strat_samp}
Stratified Sampling is a sampling-based approach for approximating $\mu = \bbE{\bigl(f(\bsY)\bigr)}$ with reduced variance compared to Monte
Carlo sampling \cite{Asmussen_Glynn_07,Kroese_etal_11,KrPe_ref:PeKr_22}. This is achieved by decomposing the stochastic domain $\setU$
into multiple disjoint regions, so-called strata. Let $\calS$ be a stratification of the domain $\setU$, i.e., $\mathfrak{U} = \cup_{S\in\calS}S $ with $S\cap T = \emptyset$ for $S,T\in\calS$ with $S\not=T$. Denoting  by $p_S = \bbP(\bsY\in S)$ the measure (or ``size'') of stratum $S$, by the law of total probability we can write
\begin{equation}\label{eqn:stratification:start}
 \mu = \bbE{\bigl(f(\bsY)\bigr)} =  \sum_{S\in\calS} p_S \bbE{\bigl(f(\bsY)| \bsY\in S\bigr)} = \sum_{S\in\calS} p_S \bbE{(f_S)}\;,
\end{equation}
where $f_S = f(\bsY)| \bsY\in S$ denotes the random variable that has the distribution of $f(\bsY)$ conditioned upon $\bsY\in S$. The common approach to Stratified Sampling is to approximate the expectation $\bbE{(f_S)}$ by SMC with $N_S$ samples in each stratum $S$, assuming $p_S$ is known, leading to
\begin{equation}\label{eqn:strat_samp_est}
\hat{\mu}^{\text{Strat}}_{\text{MC}} = \sum_{S \in \calS} \frac{p_{S}}{N_{S}}  \sum_{j=1}^{N_S} f_S^{(j)}\;,
\end{equation}
where $f_S^{(j)}$, $j=1,\dots, N_S$, are independent and identically distributed (i.i.d.) samples from the conditional random variable $f_S$. Consequently, $\hat{\mu}^{\text{Strat}}_{\text{MC}}$ is an unbiased estimator of $\mu$ and its variance is 
\begin{equation} \label{eq:ss_estimator:var:general}
  \var{(\hat{\mu}^{\text{Strat}}_{\text{MC}})} = \sum_{S\in\calS}\frac{p_S^2 \sigma_S^2}{N_S}\;,
\end{equation}
with conditional, i.e., local variance $\sigma_S^2 := \var{(f_S)} = \var{\bigl(f(\bsY)|\bsY\in S\bigr)}$. 

In practice, a rule for the number of samples $N_S$ in each stratum $S\in\calS$ needs to be selected. Let $N = \sum_{S\in\calS} N_S$ denote the total number of samples used. The common choices for distributing these $N$ samples across the strata in $\calS$ are proportional allocation and optimal sample allocation \cite{Asmussen_Glynn_07}, where the number of samples in a stratum $S$ is chosen proportional to its size $p_S$ 
or, respectively, such that \eqref{eq:ss_estimator:var:general} is minimized subject to the condition $N = \sum_{S\in\calS} N_S$, leading to the corresponding number of samples
\begin{equation*}
    N^{\text{prop}}_S = p_S N\quad\text{and}\quad N^{\text{opt}}_S = \frac{p_S \sigma_S}{\sum_{T\in \calS} p_T \sigma_T} N\;,
\end{equation*}
respectively. Recently, hybrid allocation rules have been introduced that bridge the gap between proportional and optimal allocation \cite{KrPe_ref:PeKr_22}. Finally, while most of the concepts presented in this work can be generalized, we will henceforth consider only stratifications $\calS$ that consist of hyperrectangles.

\subsection{Latin Hypercube Sampling within Stratified Sampling}
\label{sec:LHS-within-SS}
Although an SMC estimator for $\bbE{(f_S)}$ in \eqref{eqn:stratification:start} is natural and  commonly used, other estimators are possible too. Indeed, using the stratification $\calS$, a general unbiased  estimator of $\mu = \bbE{\bigl(f(\bsY)\bigr)}$ can be defined through
\begin{equation}\label{eqn:strat_samp:generic}
\hat{\mu}^{\text{Strat}} = \sum_{S \in \calS} p_{S} \hat{\mu}_S\;,
\end{equation}
for any unbiased estimators $\hat{\mu}_S$ of $\bbE{(f_S)}$, $S\in\calS$. As we focus on sampling-based estimators, we consider $\hat{\mu}_S$ to be an estimator using $N_S$ samples, but remark that other choices are possible. We also suppose that the samples $f_S^{(j)}$, $j=1,\dots, N_S$, in stratum $S$ and the samples $f_T^{(k)}$, $k=1,\dots, N_T$, in stratum $T$ are independent for $S\not=T$, so that
\begin{equation} \label{eq:ss_estimator:generic:var:general}
  \var{(\hat{\mu}^{\text{Strat}})} = \sum_{S\in\calS}p_S^2 \var{(\hat{\mu}_S)}\;.
\end{equation}
In this work, we will use the generalized Stratified Sampling viewpoint~\eqref{eqn:strat_samp:generic} with an LHS estimator $\hat{\mu}_S$ in each stratum $S\in\calS$.  
LHS is somewhat related to stratified sampling \cite{Asmussen_Glynn_07} and can overcome the computationally prohibitive nature of (uniform) stratified sampling in larger dimensions $d$. The idea of LHS in stratum $S$ is to stratify each component of the conditioned vector $\bsY_S := (Y_{1,S},\dots, Y_{d,S})| \bsY \in S$ but not the  whole sampling domain (i.e., stratum) $S$. In particular, $N_S$ points $\bsY_{S,N_S}^{(j)}$, $j=1,\dots N_S$, are drawn uniformly in stratum $S$ in such a way that each component of $\bsY_S$ itself is stratified with $N_S$ uniform LHS-strata and one point per LHS-stratum. In other words, LHS stratifies each marginal distribution of $\bsY_S$. 

Using the Latin Hypercube samples $\bsY_{S,N_S}^{(j)}$, $j=1,\dots N_S$,  of $\bsY_S$, the LHS estimator of $\mu_S$ is then given by
\begin{equation}
\label{eqn:strat_samp:lhs:stratum}
\hat{\mu}_{S}^{\text{LHS}} = \frac{1}{N_S}  \sum_{j=1}^{N_S} \tilde{f}_{S, N_S}^{(j)}\;,\quad \tilde{f}_{S,N_S}^{(j)} := f(\bsY_{S,N_S}^{(j)})\;, j=1,\dots, N_S\;.
\end{equation}
While the LHS estimator $\hat{\mu}_{S}^{\text{LHS}}$ is an unbiased estimator of $\bbE(f_S)$, the LHS sample is correlated since the $k$th component of $\bsY_S^{(j)}$ cannot be in the same LHS-stratum as its  $\ell$th component for $k\not=\ell$. This correlation leads to
\begin{equation}
\label{eqn:strat_samp:lhs:stratum:var}
\var{(\hat{\mu}_{S}^{\text{LHS}})} = \frac{\sigma_S^2}{N_S}  + \frac{N_S - 1}{N_S} \cov{\big(\tilde{f}_{S,N_S}^{(1)}, \tilde{f}_{S,N_S}^{(2)}\bigr)}\;,
\end{equation}
 indicating that the LHS estimator only offers a lower variance compared to SMC for a fixed sample size  $N_S$, if $\cov{\big(\tilde{f}_{S,N_S}^{(1)}, \tilde{f}_{S,N_S}^{(2)}\bigr)}<0$. This is ensured, for example, whenever the restriction $f_S = {f|}_{S}$ of $f$ to stratum $S$ is monotonic in each of its arguments~\cite{McKay_etal_79}. However, for $N_S$ large, the variance will never increase compared to SMC. Indeed, $\lim_{N_S\to\infty} N_S \cov{\big(\tilde{f}_{S,N_S}^{(1)}, \tilde{f}_{S,N_S}^{(2)}\bigr)} \le 0$ for any $f\in L^2(S)$ \cite{Stein_87}, so that $\lim_{N_S\to\infty} N_S\var{(\hat{\mu}_{S}^{\text{LHS}})} \le \sigma_S^2$
 in view of \eqref{eqn:strat_samp:lhs:stratum:var}. Moreover, the LHS estimator is particularly efficient whenever $f$ is close to an additive function \cite{Stein_87}, in the sense that
 $N_S \var{(\hat{\mu}_{S}^{\text{LHS}})} = {\Vert r_f \Vert}^2_{L^2(S)} + o(1)$ as $N_S\to\infty$, where $r_f = \argmin_{\phi \in U}{\Vert f- \phi \Vert}^2_{L^2(S)}$ with $U = \bigl\{g\in L^2(S)\;\bigl|\bigr.\; g(y) = \sum_{k=1}^dg_k(y_k), g_k\colon\bbR\to\bbR\bigr\}$.

\section{Sequential Estimation with Latin Hypercube Sampling}
\label{sec:adaptive:proc}
Previous approaches to adaptive stratified sampling use a sequential sampling strategy~\cite{Etore_etal_11, KrPe_ref:PeKr_22, Song_Kawai_22}. For example, in \cite{KrPe_ref:PeKr_22} the total number of samples $N$ is not generated at once, but sequentially in a batch-wise manner instead. A batch of new i.i.d.\ (i.e., Monte Carlo) samples  are added to strata with already existing
samples.
Before repeating this procedure for the next batch, the variance contribution of each stratum is estimated and the stratification is refined to systematically reduce the estimator's variance. This sequential sampling approach cannot be extended to samples generated by LHS instead of random i.i.d.\ sampling. This is because LHS does not produce nested samples, in the sense that in stratum $S$, the Latin Hypercube samples $\bsY_{S,N_S}^{(j)}$, $j=1,\dots N_S$,  
will not be a valid subset of an LHS when using $M_S > N_S$ samples. To leverage the benefits of LHS within an adaptive Stratified Sampling approach, we will therefore use a sequential estimation procedure instead.

\subsection{Sequential Estimation}
The idea of this work to construct a generalized stratified sampling estimator using LHS that is adapted to the problem at hand, i.e., tailored to the approximation of $\mu = \bbE{\bigl(f(\bsY)\bigr)}$, is to also use a batch-wise strategy instead of using the total number of samples $N$ directly. Because of the non-nested nature of LHS, each batch will be used to construct a separate Stratified Sampling estimator combined with LHS. That is, let $N = \sum_{\ell=1}^L N^{(\ell)}$ be a decomposition of the total number of samples with batches of sizes $N^{(\ell)}\in\bbN$ and denote by $\hat{\mu}^{\text{Strat}}_{\text{LHS}, \ell}$, $\ell=1,\dots, L$, the Stratified Sampling estimator using $N^{(\ell)}$ LHS samples. Each estimator $\hat{\mu}^{\text{Strat}}_{\text{LHS}, \ell}$ is called stratified LHS estimator (S-LHS estimator for short). The motivation for creating these batch-wise estimators sequentially is that for a given  $\hat{\mu}^{\text{Strat}}_{\text{LHS}, \ell}$, the stratification for the next estimator $\hat{\mu}^{\text{Strat}}_{\text{LHS}, \ell+1}$ can be adapted, e.g., in a variance minimizing manner (to be made precise in what follows), using information about the current estimator $\hat{\mu}^{\text{Strat}}_{\text{LHS}, \ell}$. This refinement procedure is expected to yield increasingly better adapted generalized Stratified Sampling estimators. However, the batch-wise construction means that the final S-LHS estimator $\hat{\mu}^{\text{Strat}}_{\text{LHS}, L}$ will have a total number of samples $N^{(L)}$ that is much smaller than $N$. Recall that each sample involves an evaluation of the possibly expensive model $f$. To not ``waste'' samples that were used to construct the earlier S-LHS estimators in this sequential procedure, which may already contain valuable insight about $\mu = \bbE{\bigl(f(\bsY)\bigr)}$, we will combine the entire ensemble of S-LHS estimators into a global one. A weighted average of all estimators is obtained as summarized in the following result, whose proof is constructive and hence added for completeness

\begin{lemma}
Let $\{\hat{\mu}_1,\dots \hat{\mu}_L\}$, $L\in\bbN$, be an ensemble of mutually independent, unbiased estimators of $\mu$, that is $\bbE(\hat{\mu}_\ell) = \mu$ for $1\le \ell\le L$, each with finite variances $v_\ell = \var(\hat{\mu}_\ell)\ge 0$. Consider the weighted ensemble estimator 
\begin{equation}
\label{eq:weighted_estimator}
    E_L(\alpha) = \sum_{\ell = 1}^L \alpha_\ell \hat{\mu}_\ell,
\end{equation}
for weights $\alpha\equiv (\alpha_1,\dots, \alpha_L)\in\bbR^L$ with $\alpha_\ell \ge 0 $ for $1\le \ell\le L$ and such that $\sum_{\ell=1}^L\alpha_\ell = 1$. Then $E_L(\alpha)$ is an unbiased estimator of $\mu$ for any weight vector $\alpha$. Moreover, there exists a weight vector $\alpha^\ast\in\bbR^L$ to be identified in the proof, that minimizes $\alpha \mapsto \var\bigl(E_L(\alpha)\bigr)$.
\end{lemma}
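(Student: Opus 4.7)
The plan has two parts, matching the two assertions of the lemma. For unbiasedness, I would invoke linearity of expectation: since each $\hat{\mu}_\ell$ is unbiased, we have
\[
\bbE\bigl(E_L(\alpha)\bigr) = \sum_{\ell=1}^L \alpha_\ell\, \bbE(\hat{\mu}_\ell) = \mu \sum_{\ell=1}^L \alpha_\ell = \mu\,,
\]
using the normalization $\sum_\ell \alpha_\ell = 1$ in the last step. This step uses neither independence nor nonnegativity of the weights, so it is valid on the affine hyperplane $\{\alpha : \sum_\ell\alpha_\ell=1\}$ and in particular on the admissible simplex.

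For the existence of a variance-minimizing weight vector, I would first exploit mutual independence to rewrite
\[
\var\bigl(E_L(\alpha)\bigr) = \sum_{\ell=1}^L \alpha_\ell^2 \,\var(\hat{\mu}_\ell)\,.
\]
This is a convex quadratic form in $\alpha$, hence a continuous function of $\alpha\in\bbR^L$. Next I would restrict this function to the admissible set
\[
\Delta_L = \Bigl\{\alpha\in\bbR^L : \alpha_\ell\ge 0,\ \sum_{\ell=1}^L \alpha_\ell = 1\Bigr\}\,,
\]
which is a closed and bounded, hence compact, subset of $\bbR^L$ (the standard probability simplex). Existence of a minimizer $\alpha^\ast$ then follows at once from the Weierstrass extreme value theorem applied to a continuous function on a nonempty compact set.

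If desired, one can also exhibit the minimizer explicitly: assuming $\var(\hat{\mu}_\ell)>0$ for all $\ell$, a Lagrange multiplier computation on the equality constraint yields
\[
\alpha^\ast_\ell = \frac{1/\var(\hat{\mu}_\ell)}{\sum_{k=1}^L 1/\var(\hat{\mu}_k)}\,, \qquad \ell=1,\dots,L\,,
\]
which automatically satisfies the nonnegativity constraints, so it lies in $\Delta_L$ and solves the minimization problem. This second route has the advantage of providing a formula that is later useful when actually assembling the ensemble estimator, but it requires handling degenerate variances separately (if some $\var(\hat{\mu}_\ell)=0$, the minimum is zero and is achieved by concentrating all weight on such an estimator).

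I do not anticipate a real obstacle; the only mild care is in choosing whether to argue by compactness (cleaner and works uniformly) or by the explicit Lagrange formula (more informative but requires the nondegeneracy caveat). I would lead with the compactness argument to establish existence cleanly and mention the closed-form minimizer as a remark, since the lemma only asserts existence and not uniqueness.
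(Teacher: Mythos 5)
Your proof is correct, but your lead argument is genuinely different from the paper's. You establish existence abstractly: the variance $\sum_\ell \alpha_\ell^2 \var(\hat{\mu}_\ell)$ is continuous on the compact probability simplex, so Weierstrass gives a minimizer at once, with no case distinction for degenerate (zero) variances. The paper instead proceeds constructively throughout: it first disposes of the case where some $\var(\hat{\mu}_k)=0$ by hand, then eliminates $\alpha_L$ via the normalization constraint, shows the reduced Hessian $M=\operatorname{diag}(v_1,\dots,v_{L-1})+v_L\boldsymbol{1}\boldsymbol{1}^T$ is positive definite, and inverts it with the Sherman--Morrison formula to obtain the explicit weights $\alpha^\ast_\ell = v_\ell^{-1}/\sum_k v_k^{-1}$ together with the minimal variance $\bigl(\sum_\ell v_\ell^{-1}\bigr)^{-1}$, verifying nonnegativity and the sum constraint a posteriori. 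What each route buys: yours is shorter and uniform across all cases if one only needs existence, which is all the lemma formally asserts; the paper's computation is longer but produces the closed-form weights and minimal variance that the rest of the paper actually uses (in the SS-LHS-gPC assembly and Algorithm~\ref{algo:repeated_ss}), so the explicit formula is not optional there. Your closing remark recovers the same formula via Lagrange multipliers under the nondegeneracy assumption, so in combination your proposal covers everything the paper's proof delivers; the only minor difference is bookkeeping (Lagrange on the full simplex versus elimination plus Sherman--Morrison), and both correctly handle the inequality constraints by checking them after the fact.
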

\begin{proof}
The fact that $E_L(\alpha)$ is an unbiased estimator of $\mu$ follows immediately from the unbiasedness of each $\hat{\mu}_\ell$ together with $\sum_{\ell=1}^L\alpha_\ell = 1$ for any weight vector $\alpha\equiv (\alpha_1,\dots, \alpha_L)$. Next, let $v_\ell := \var(\hat{\mu}_\ell)$. We then distinguish two cases. First, if there exists $k\in\{1,\dots, L\}$ such that $v_k = 0$, then $\alpha^\ast_\ell = \delta_{k,\ell}$, where $\delta_{k,\ell}$ is the Kronecker delta, leads to a weighted ensemble estimator with zero variance. 

Second, suppose that $\min_{1\le \ell\le L} v_l >0$. Without loss of generality, we also assume that $L\ge 2$. Using the condition $\sum_{\ell=1}^L\alpha_\ell = 1$, we can write the weighted ensemble estimator's variance as
\begin{equation*}
    \var\bigl(E_L(\alpha)\bigr) = \sum_{\ell = 1}^L {\alpha_\ell}^2 v_\ell = \sum_{\ell = 1}^{L-1} {\alpha_\ell}^2 v_\ell + {\left(1-\sum_{\ell=1}^{L-1}\alpha_\ell\right)}^2v_L =: V_L(\tilde\alpha)\;,
\end{equation*}
where $\tilde\alpha = (\tilde\alpha_1,\dots, \tilde\alpha_{L-1})\in\bbR^{L-1}$ with $\tilde\alpha_\ell = \alpha_\ell$ for $1\le \ell \le L-1$. Introducing the notation $M := \operatorname{diag}(v_1,\dots, v_{L-1}) + v_L \boldsymbol{1}\boldsymbol{1}^T$ and $b := v_L \boldsymbol{1}$ with  $\boldsymbol{1} := (1,\dots, 1)^T\in\bbR^{L-1}$, we can write the function $V_L\colon\bbR^{L-1}\to\bbR$ as $V_L(\tilde\alpha) = \tilde\alpha^T M \tilde\alpha - 2 \tilde\alpha^T b + v_L$. The matrix $M$ is symmetric. Moreover, $v_L\boldsymbol{1}\boldsymbol{1}^T$ is a rank-one matrix, whose only non-zero eigenvalue is equal to $\operatorname{tr}(v_L\boldsymbol{1}\boldsymbol{1}^T) = v_L (L-1) > 0$ due to the hypothesis $\min_{1\le \ell\le L} v_l >0$. That is, $v_L\boldsymbol{1}\boldsymbol{1}^T$ is positive semi-definite. Since the matrix $\operatorname{diag}(v_1,\dots, v_{L-1})$ is positive definite in view of the hypothesis, it follows that the matrix $M$ is positive definite. Consequently, the quadratic function $V_L$ is strictly convex and has a unique global minimum attained at $\tilde\alpha^\ast$, which satisfies the necessary condition $M\tilde\alpha^\ast = b$. 

It remains to show that $\tilde\alpha^\ast$ leads to an admissible weight vector, in the sense that $\tilde\alpha_\ell^\ast\ge 0$ for $\ell\in\{1,\dots, L-1\}$, and $\sum_{\ell=1}^{L-1}\tilde\alpha_\ell^\ast \le 1$. Indeed, from the Sherman--Morrison formula \cite{KrPe_ref:Ba_51}, we find
\begin{equation}
\label{KrPe_eq:optimal:weights:avg}
    \tilde\alpha^\ast = M^{-1}b = \left( D^{-1} - \frac{v_L D^{-1}\boldsymbol{1}\boldsymbol{1}^T D^{-1}}{1 + v_L \boldsymbol{1}^TD^{-1}\boldsymbol{1}} \right)b = \frac{1}{\sum_{\ell=1}^{L}\frac{1}{v_\ell}}
    \begin{pmatrix}
    \frac{1}{v_1}\\\vdots\\\frac{1}{v_{L-1}}
    \end{pmatrix}\;,
\end{equation}
where $D^{-1}=\operatorname{diag}(v_1^{-1},\dots, v_{L-1}^{-1})$, 
from which both conditions immediately follow. Consequently, the optimal weight vector $\alpha^\ast\in\bbR^L$ is given by
\begin{equation}
\label{KrPe_eq:optimal:weights:avg:final}
    \alpha^\ast_k := \begin{cases} 1- \sum_{\ell=1}^{L-1}\tilde\alpha^\ast_\ell  \;,&k = L\;,\\ \tilde\alpha^\ast_k\;,&\text{else},
    \end{cases}
\end{equation}
for $k\in\{1,\dots, L\}$. Finally, the optimally weighted ensemble estimator’s variance is
\begin{equation*}
\begin{aligned}
\var\bigl(E_L(\alpha^\ast)\bigr) &= V_L(\tilde\alpha^\ast) = {(\tilde\alpha^\ast)}^T M \tilde\alpha^\ast - 2 {(\tilde\alpha^\ast)}^T b + v_L\\
&= v_L - {(\tilde\alpha^\ast)}^T b = v_L\left( 1 - {(\tilde\alpha^\ast)}^T\boldsymbol{1}\right) = \frac{1}{\sum_{\ell=1}^{L}\frac{1}{v_\ell}} \;,
\end{aligned}
\end{equation*}
which concludes the proof.
\end{proof}
To illustrate the preceding lemma, consider the weighted average of $L=2$ estimators $\hat\mu_1$ and $\hat\mu_2$, each with a non-zero variance $v_1$ and $v_2$, respectively. Using formula \eqref{KrPe_eq:optimal:weights:avg}, we find that the optimally weighted ensemble estimator is
\begin{equation*}
    \hat{E}_2 := \frac{v_2}{v_1 + v_2} \hat\mu_1 + \frac{v_1}{v_1 + v_2} \hat\mu_2\;.
\end{equation*}
For the special case of $\hat\mu_1$ and $\hat\mu_2$ being standard Monte Carlo estimators of $\mu=\bbE(Q)$ with $M_1$ and $M_2$ i.i.d.\ samples, respectively, the optimally weighted ensemble estimator coincides, unsurprisingly, with a plain Monte Carlo estimator using $M_1 + M_2$ i.i.d.\  samples, since $\frac{v_1}{v_1 + v_2} = \frac{M_2}{M_1 + M_2}$ and $\frac{v_2}{v_1 + v_2} = \frac{M_1}{M_1 + M_2}$. 

Next, we will introduce concepts required to guide the construction of the sequence of S-LHS estimators using hierarchical stratification refinements.

\subsection{Sensitivity Analysis using Local Surrogate Methods}

To construct a sequence of successively refined S-LHS estimators, we introduce a framework for informing the stratification of a member of the sequence of estimators, based on its immediate predecessor. 
This is achieved by computing local sensitivities of all strata with respect to the random input variables, which allows choosing the best new stratification among a set of candidate stratifications. As the function $f$ is only known via its discrete sampling points, a surrogate function based on generalized Polynomial Chaos expansions will be briefly described first. This is followed by an outline of a sensitivity analysis using a Sobol decomposition. Moreover, we establish a connection between generalized Polynomial Chaos coefficients and Sobol sensitivities. Finally, the effective dimension of the function $f$ will be discussed, which is closely related to the Sobol decomposition and indicative to explain the performance of LHS in general, and applicable to the numerical test cases in Sect.~\ref{sec:numerics} in particular.

\subsubsection{Surrogate methods using generalized Polynomial Chaos expansion}
\label{sec:surr_gPC}
The generalized Polynomial Chaos (gPC) expansion in orthogonal polynomials (c.f.~\cite{Xiu_Karniadakis_02}) provides a useful representation of functions with finite second-order moments. For example, the gPC expansion of the function $f$ restricted to a given stratum $S\in\calS$ can be written
\begin{equation}
\label{eq:gPC_expansion}
    f^{\text{gPC},S}(\bsY_{S}) = \sum_{ \mathbf{m} \in \mathbb{Z}_{0+}^{d} } f^{\text{gPC},S}_{\mathbf{m}} \psi^{S}_{\mathbf{m}} (\bsY_{S})\;,
\end{equation}
where $\mathbb{Z}_{0+}^{d}$ denotes the index set of $d$-tuples of non-negative integers, $f^{\text{gPC},S}_{\mathbf{m}}$ are the gPC coefficients to be determined, and the orthogonal polynomials $\psi^{S}_{\mathbf{m}} (\bsY_{S})$ will be described next. 
Assuming that the strata defined in Sect.~\ref{sec:strat_samp} are hyperrectangles (or tensor product spaces in general), an orthonormal polynomial basis $\{ \psi^S_{k,m}(Y_{k,S})\}_{m=0}^{\infty}$ in $Y_k$ ($k=1,\dots, d$) restricted to stratum $S$ can be introduced following the ideas of multi-element gPC~\cite{Wan_Karniadakis_05, Wan_Karniadakis_06}. A multidimensional orthonormal basis can be formed by products of the one-dimensional basis functions, i.e.,  $\psi^S_{\mathbf{m}}(\bsY_S) \equiv  \Pi_{k=1}^{d} \psi^S_{k,m_{k}}(Y_{k,S})$ with $\mathbf{m} = (m_1,m_2,\dots, m_d) \in \mathbb{Z}_{0+}^{d}$. In this work, we employ a total order basis, i.e., restrict the indices to $\left\| \mathbf{m} \right\|_{1}\leq p$ for some positive integer $p$, but other choices are possible. The resulting multivariate basis functions are orthonormal with respect to the local PDF $\rho_{S}$,
\[
\int_{S} \psi_{\mathbf{m}}^{S}(\bsY_S) \psi_{\mathbf{n}}^{S}(\bsY_S) \rho_{S}(\bsY_S)\textup{d} \bsY_S = 
\left\{ 
\begin{array}{ll}
1 & \mbox{if } \mathbf{m}=\mathbf{n}\;,\\
0 & \mbox{if } \mathbf{m} \neq \mathbf{n}\;.
\end{array}
\right.
\]
In this work, one-dimensional basis functions are computed dimension by dimension using Stieltjes procedure~\cite{Stieltjes_84}, as described in, e.g., (2.13)-(2.15) in~\cite{Wan_Karniadakis_06}, i.e., employing the recurrence relation
\begin{align}
\begin{split}
\label{eq:rec_coeff}
\psi^S_{k,m_{k}+1}(Y_{k,S}) &= (Y_{k,S}-\alpha_{m_{k}})\psi^S_{k,m_{k}}(Y_{k,S}) - \beta_{m_k} \psi^S_{k,m_{k}-1}(Y_{k,S}),\\
\psi^S_{k,0}(Y_{k,S}) &= 1, \quad \psi^S_{k,-1}(Y_{k,S}) = 0,
\end{split}
\end{align}
with the recurrence coefficients obtained from the expressions

\[
\alpha_{m_{k}} = \frac{\bbE{\bigl(  Y_{k,S} \psi^S_{k,m_{k}} \psi^S_{k,m_{k}} \bigr) }}{\bbE{ \bigl( \psi^S_{k,m_{k}} \psi^S_{k,m_{k}} \bigr) }},
\quad
\beta_{0}=1, \quad
\beta_{m_{k}} = \frac{\bbE{\bigl(  \psi^S_{k,m_{k}} \psi^S_{k,m_{k}} \bigr) }}{\bbE{ \bigl( \psi^S_{k,m_{k}-1} \psi^S_{k,m_{k}-1} \bigr) }},
\]
where the expectations can be computed exactly using Gauss-type quadrature, since all integrands are polynomials.

As the stochastic input domain is assumed to be transformed to the unit hypercube with uniform distribution, all one-dimensional basis functions will be rescaling of Legendre polynomials. For the more general case of high-dimensional non-uniform distributions, we refer to the recent development of multivariate generalizations of Stieltjes algorithm~\cite{Liu_Narayan_22}.

\subsubsection{Sobol Decomposition and Sensitivity Analysis}
Sobol decompositions provide a means to decompose a function of uniform random variables into variance contributions from all subsets of its arguments~\cite{Sobol_93}.
The Sobol decomposition of the function $f$ restricted to stratum $S$ is given by
\[
f^{\text{Sobol},S}(\bsY_S) = f_{\emptyset, S} +
\sum_{T \subset \{ 1,2,..., d\} } f_{T, S}(\bsY_{T,S}),
\]
where the terms $f_{T, S}$ are orthogonal with respect to the usual inner product on $L^2(S)$ and have zero mean (except for the constant $f_{\emptyset}$). Here, $\bsY_{T,S}$ denotes the components of the vector $\bsY$ indexed by $T$. The Sobol decomposition of  $f$ admits a corresponding variance decomposition:
\begin{equation}   
\var(f(\bsY_S)) =  \sum_{T \subset \{ 1,2,..., d\} } \sigma_{T,S}^2,
\label{eq:Sobol-var-dec}
\end{equation}
where $\sigma_{T,S}^2$ is the variance contribution from the variables in the subset $T$ of all random variables.
This Sobol variance decomposition allows direct determination of the sensitivities of all variables and all subsets of them to the total variance.
However, computing all $2^d$ distinct variance contributions by equally many direct Monte Carlo estimators is computationally expensive. Instead, using that there is a direct correspondence between subsets of gPC terms and Sobol terms, i.e., 
\begin{equation}
\label{eq:Sobol-gPC}
\sigma_{T,S}^2 = \sum_{\mathbf{m} \in \mathcal{I}_T} \left( f^{\text{gPC},S}_{\mathbf{m}} \right)^2, \mbox{ where } \mathcal{I}_{T} = \Bigl\{ \mathbf{m}=(m_1,\dots,m_d) \in \mathbb{Z}_{0+}^{n} : 
\left\{
\begin{array}{ll}
m_k = 0  & \mbox{ if } k \notin T\\
m_k > 0 & \mbox{ if }  k \in T
\end{array}
\right.
\Bigr\},
\end{equation} 
one can directly evaluate the Sobol sensitivities from the gPC coefficients~\cite{Sudret_08}. In other words, the Sobol index $S_{T}$ is the sum of squares of the presumably already computed gPC coefficients of those basis functions that are functions of the random variables defined by the set $T$.

\subsubsection{Effective Dimension}
The Sobol decomposition is closely related to the concept of effective dimension.
Following the definitions on p.~35 in~\cite{Caflisch_etal_97}, 
the effective dimension of the function $f$ restricted to stratum $S$ in the superposition sense is the smallest integer $d_{\text{sup}}$ such that
\[
\sum_{ \substack{ T \subset \{1,2,\dots, d\} \\ |T| \leq d_{\text{sup}} }} \sigma^2_{T,S}
\geq \alpha \var(f(\bsY_S))\;, 
\]
where $\sigma^2_{T,S}$ are terms in the variance decomposition~\eqref{eq:Sobol-var-dec}, and 
the threshold $\alpha$ is a constant set to some value in the unit interval depending on application. For the numerical study that follow, we will use $\alpha=0.99$, but we do not expect this value to affect numerical results much, e.g., by setting $\alpha=0.9$ or $\alpha=0.999$ instead.
Similarly, the effective dimension in the truncation sense is the smallest integer $d_{\text{tr}}$
satisfying 
\[
\sum_{T \subseteq \{1,\dots, d_{\text{tr}} \} } \sigma^2_{T,S}
\geq \alpha \var(f(\bsY_S))\;. 
\]
The notion of effective dimension is of interest in the current work for two reasons. First, a function with low $d_{\text{sup}}$ contains only low-order Sobol decomposition terms, which has an impact on the sensitivity estimates. Second, a function with low $d_{\text{sup}}$ or $d_{\text{tr}}$ are particularly suited for LHS sampling, as additive variance effects are filtered  out as described at the end of Sect.~\ref{sec:LHS-within-SS} and in~\cite{Stein_87}.
 
\subsection{A Sobol Sensitivity informed Sequence of hierarchical S-LHS Estimators} 
\label{sec:SS-LHS-Sobol}

Now we present a method for combining Stratified Sampling with LHS properties using stratum-wise sensitivity estimates via Sobol decomposition and gPC approximation to construct a sequence of estimators. To that means, we need to present an explicit stratification refinement criterion, based on a previously computed S-LHS estimator. Finally, once we have all members of the sequence of estimators, we need to estimate the variance of each estimator so that we can compute the weights according to~\eqref{KrPe_eq:optimal:weights:avg}--\eqref{KrPe_eq:optimal:weights:avg:final}, and to assemble a weighted sequential S-LHS estimator based on gPC indicators, denoted SS-LHS-gPC estimator. The same variance estimates will be expanded by means of sensitivities to be used as the criterion for refinement of the stratifications. The variance of Stratified Sampling estimators is  
dependent on the sampling density, so first we need to establish an appropriate sample allocation rule for the SS-LHS-gPC estimator.

Optimal allocation of samples requires knowledge of local variances, and if those are inaccurately approximated, the performance can be significantly inhibited~\cite{Cochran_77}. Proportional allocation is a more safe option, but we wish to use more of the information from previous estimators when allocating samples. To this end, we introduce the heuristic criterion that a stratum that is small has this property due to previous splittings, indicating that it is located in a region of stochastic space with high variability. Hence, it should have a relatively high sample allocation rate. We set this rate to be inversely proportional to the size of the stratum, and consequently allocate a constant number $\bar{N}$ of samples per stratum, independent of its size. This also implies that the total number of samples of a given estimator in the sequence is linear in the number of strata. There is otherwise significant flexibility in the design of a sequential S-LHS estimator, both in terms of the sample allocation in each stratification, and in terms of the design of a new stratification from an existing one.

The variance of each estimator in~\eqref{eq:weighted_estimator} is not a-priori known, and needs to be estimated. 
For $\bar{N}$ sufficiently large, an (arguably conservative in view of Sect.~\ref{sec:LHS-within-SS}) approximation for the variance based on~\eqref{eq:ss_estimator:var:general} is given by
\begin{equation}
\label{eq:var_est_emp}
v_{\ell} \approx \sum_{S \in \calS_{\ell}} p_S^2 \frac{\hat{\sigma}_{S}^2}{N^{(\ell)}_S} = \sum_{S \in \calS_{\ell}} p_S^2 \frac{\hat{\sigma}_{S}^2|\calS_\ell|}{N^{(\ell)}} =
\frac{1}{\bar{N}} \sum_{S \in \calS_{\ell}} p_S^2 \hat{\sigma}_{S}^2\;,
\end{equation}
recalling that $\ell \bar{N} = N^{(\ell)}=\sum_{S\in\calS_\ell} N^{(\ell)}_S$, $N^{(\ell)}_S = N^{(\ell)}/{|\calS_\ell|}$, and $|\calS_\ell| = \ell$.
Here, $\hat{\sigma}_{S}$ is the (standard) sample standard deviation in stratum $S$, and $\calS_{\ell}$ is the stratification of estimator $\ell$ for $\ell=1,\dots, L$. 
If available, one may, of course, substitute the sample standard deviation by an LHS estimate; c.f.~\cite{Stein_87}.

For each stratum of an S-LHS member of the estimator sequence, we compute a local gPC basis and estimate the coefficients $f^{\text{gPC},S}_{\mathbf{m}}$ in Eq.~\eqref{eq:gPC_expansion} from which we evaluate the variance decomposition~\eqref{eq:Sobol-var-dec}. From the local variance decomposition and known sizes of strata, the relative effect on the estimator's variance from all subsets of dimensions can be quantified, as given by inserting Eq.~\eqref{eq:Sobol-var-dec} into Eq.~\eqref{eq:var_est_emp}: 
\begin{equation}
\label{eq:ss-var-dec}
v_{\ell} = \frac{1}{\bar{N}} \sum_{S \in \calS_{\ell}} p_S^2 \sum_{T \subset \{ 1,2,..., d\} } \sigma_{T,S}^2
\end{equation}
This decomposition, together with a rule for how current strata should be merged and/or split, will be used to design the next estimator's stratification. 
For simplicity, we henceforth assume that we bisect a single stratum along a single dimension that has the highest contribution to the estimator's variance, as given by Eq.~\eqref{eq:ss-var-dec}. For the numerical test cases considered below, this already gives a significant variance reduction between members of the sequence of estimators.

To simplify sampling, we will permit hyperrectangular (or tensor product type) strata only. The new estimator will be sampled independently without reuse of its predecessor's sample points. Hence, the sequence of estimators described above are sampled independently.
Once the sequence of hierarchically refined S-LHS estimators has been constructed, the final weighted SS-LHS-gPC estimator is assembled using the weights \eqref{KrPe_eq:optimal:weights:avg}--\eqref{KrPe_eq:optimal:weights:avg:final}. The complete estimation procedure is summarized in pseudocode in Algorithm~\ref{algo:repeated_ss}. 
\begin{algorithm}[htbp]
  \caption{Weighted SS-LHS-gPC estimators.}
  \label{algo:repeated_ss}
  \begin{algorithmic}[1]
    \State Initialize with trivial/uninformed stratification $S_1 = \setU$ (single stratum).

    \For{$\ell = 1:L$} 
         \For{$S \in \mathcal{S}_{\ell}$} 
             \State Generate an LHS of size $\bar{N}$ for $S$. 
             \State Determine local gPC basis on S.
             \State Compute local gPC coefficients from local function samples.
             \State Compute Sobol indices from gPC coefficients using Eq.~\eqref{eq:Sobol-gPC}.
         \EndFor
        
        \State Compute S-LHS estimator $\hat{\mu}_\ell := \hat{\mu}^{\text{Strat}}_{\text{LHS}, \ell}$
        
        \State Propose refined stratification $\calS_{\ell+1}$ based on Sobol variance decomposition~\eqref{eq:ss-var-dec}.
    \EndFor 
    \State Compute weights $\alpha$ using~\eqref{KrPe_eq:optimal:weights:avg}-\eqref{KrPe_eq:optimal:weights:avg:final}.
    \State Output $\sum_{\ell=1}^{L} \alpha_{\ell} \hat{\mu}_{\ell}$
  \end{algorithmic}
\end{algorithm}

The steps in Algorithm~\ref{algo:repeated_ss} have deliberately been kept as general as possible. Next, we provide information about the implementation of the steps. Note that there are typically alternative methods that can be used depending on the characteristics of the problem at hand. The LHS on stratum $S$ (L4) is generated via the method described in \cite[Ch.~9.6]{Kroese_etal_11}. For the local gPC basis (L5), first univariate basis functions are generated via the recurrence relations Eq.~\eqref{eq:rec_coeff} using Stieltje's procedure, and then multivariate basis functions are taken as all products up so some predefined total degree of polynomials according to Algorithm~\ref{algo:mult_var_basis}, where the polynomial order $p$ is set by the user.

\begin{algorithm}[htbp]
  \caption{Total-order basis from univariate gPC.}
  \label{algo:mult_var_basis}
  \begin{algorithmic}[1]
    \State $\mathcal{I}_{d,p}^{\text{tot ord}} = 
      \Bigl\{ \mathbf{m} \in \mathbb{Z}_{0+}^{n} : \left\| \mathbf{m} \right\| \leq p
    \Bigr\}$.

    \For{$\mathbf{m}  \in  \mathcal{I}_{d,p}^{\text{tot ord}}$ } 
        \State $\psi_{\mathbf{m}}^{S}(\bsY_S) = 1$
         \For{$k=1:d$} 
             \State $\psi_{\mathbf{m}}^{S}(\bsY_S) = \psi_{\mathbf{m}}^{S}(\bsY_S) \psi_{k,{m_k}}^{S}(Y_{k,S})$
         \EndFor
    \EndFor 
  \end{algorithmic}
\end{algorithm}
All relevant expressions can be found in Sect.~\ref{sec:surr_gPC}. Computation of the gPC coefficients stratum-wise (L6) is performed by solving the (over-determined) least-squares problem
\[
\argmin_{\mathbf{f}^{S}} \left\| \mathbf{A}^{S} \mathbf{f}^{S} - \mathbf{b}^{S} \right\|_{2}, \mbox{ where } [\mathbf{A}^{S}]_{j,m} = \psi_{\mathbf{m}}^{S}(\bsY_S^{(j)}), \quad \mathbf{f}^{S}_{m} = f^{\text{gPC},S}_{\mathbf{m}}, \quad 
\mathbf{b}^{S}_{j} = f(\bsY_S^{(j)}), 
\]
for samples $j=1,\dots, N_S$ and $m=1,\dots,| \mathcal{I}_{d,p}^{\text{tot ord}} |$ where $m=m(\mathbf{m})$ is any mapping from the multi-indices to single indices. The asymptotic computational complexity of solving the least-squares problem for a given stratum $S$ is $O(N_S | \mathcal{I}_{d,p}^{\text{tot ord}} |^2)$, but we emphasize that both $N_S$ and $| \mathcal{I}_{d,p}^{\text{tot ord}} |$ should be kept small.  Once the gPC coefficients are found, all Sobol indices can be directly computed by monitoring which of the $|\mathcal{I}_{d,p}^{\text{tot ord}}|$ polynomials are functions of any given subset of random variables (L7).

Finally, we reiterate that, throughout this work, we assume that evaluations of the function $f$ are computationally expensive. Therefore, the additional computational cost incurred by lines 4--7 in Algorithm~\ref{algo:repeated_ss} is considered negligible compared to the cost of acquiring samples by evaluating $f$.

\section{Numerical results}
\label{sec:numerics}

The numerical test cases below are performed with the proposed SS-LHS-gPC method, as summarized in Algorithm~\ref{algo:repeated_ss}. We use $\bar{N}=50$ samples for each stratum. The local gPC coefficients are obtained by solving a standard least squares problem, where the polynomial order is set so that the total number of basis functions is smaller than $\bar{N}$. In a situation where high-order polynomial approximation is desired, the least squares method could be replaced by, e.g., Monte Carlo integration using the same $\bar{N}$ samples independently for all gPC coefficients.  

\subsection{Test cases}
For numerical test cases, we consider the following functions: 
\begin{align}
&f(Y_1, Y_2) = \frac{1}{|a - Y_1^2-Y_2^2| +\delta}\;,
\label{eq:prob_Shields}
\tag{P1} \\
&f(Y_1,\dots, Y_d) = c \mathbbm{1}_{\bigl\{ \sum_{m=1}^{d'} Y_m^2 \leq r^2\bigr\}}\;,
\label{eq:prob_qcircle}
\tag{P2}\\
&f(Y_1,\dots, Y_d) = c \left( \mathbbm{1}_{\bigl\{ \sum_{m=1}^{d'} Y_m^2 \leq r_1^2\bigr\}} + \mathbbm{1}_{\bigl\{ \sum_{m=d'+1}^{2d'} Y_m^2 \leq r_2^2\bigr\}} \right)\;.
\label{eq:prob_2circles}
\tag{P3}
\end{align}
The case~\ref{eq:prob_Shields} with $a=0.3$ and varying $\delta$, adapted from test problems in~\cite{Shields_18}, is chosen to illustrate the effect of localization of the variance in stochastic space. The case~\ref{eq:prob_qcircle} with $r=0.4$ illustrates the effect of a $d'$-dimensional object embedded in a $d$-dimensional space. 
Case~\ref{eq:prob_2circles} with $r_1=r_2=0.4$ is effectively $d'$-dimensional in the superposition sense ($d_{\text{sup}}=d'$), and $2d'$-dimensional in the truncation sense ($d_{\text{tr}}=2d'$).

The function $f$ and sequential stratifications are shown in Figs.~\ref{fig:stratifications_P1}--\ref{fig:stratifications_P2} for some 2D test cases, namely for~\ref{eq:prob_Shields} with $\delta=1, 0.1, 0.01$, and $a=0.3$, and for~\ref{eq:prob_qcircle} with $d=d'=2$. The stratification sizes ($|\calS|=6, 20, 63$) have been chosen as they correspond to sample sizes of approximately $N=10^{3}$, $N=10^{4}$, and $N=10^{5}$, respectively. The stratifications are themselves random, and each subplot corresponds to a unique set of samples, so a finer stratification in Figs.~\ref{fig:stratifications_P1}--\ref{fig:stratifications_P2} is not necessarily a refinement of a coarser stratification for the same problem shown in this figure.
\begin{figure}[htbp]
\centering
\subfigure[
\ref{eq:prob_Shields}: $\delta=1$; $|\mathcal{S}|=6$.]
{\includegraphics[width=0.31\textwidth]{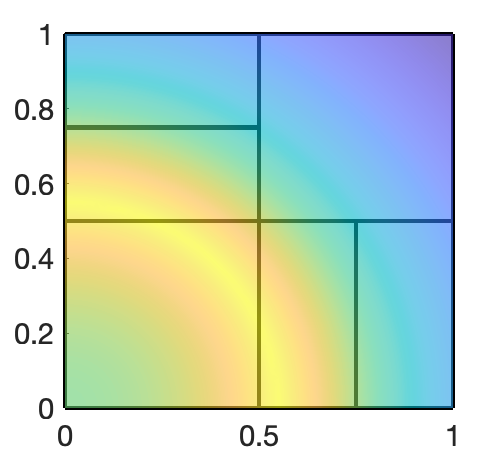}
}
~
\subfigure[
\ref{eq:prob_Shields}: $\delta=1$; $|\mathcal{S}|=20$]
{\includegraphics[width=0.31\textwidth]{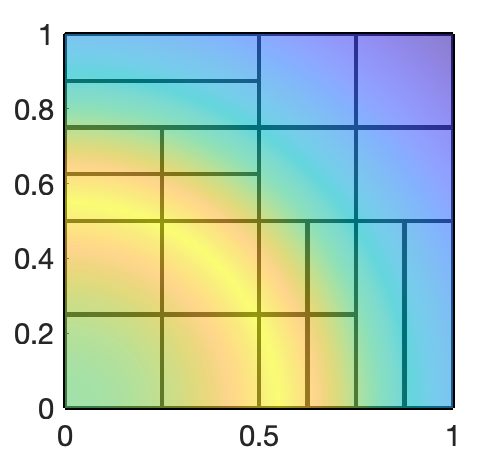}
}
~
\subfigure[
\ref{eq:prob_Shields}: $\delta=1$; $|\mathcal{S}|=63$.]
{\includegraphics[width=0.31\textwidth]{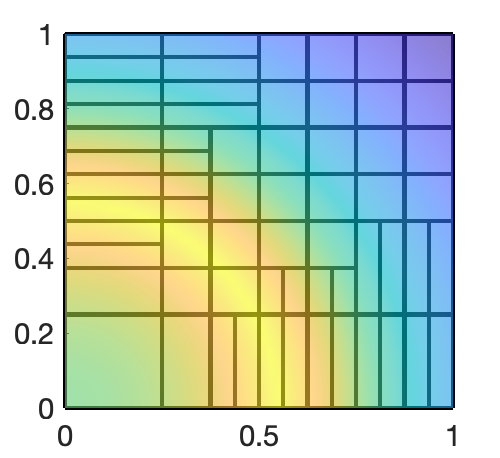}
}
\subfigure[
\ref{eq:prob_Shields}: $\delta=0.1$; $|\mathcal{S}|=6$.]
{\includegraphics[width=0.31\textwidth]{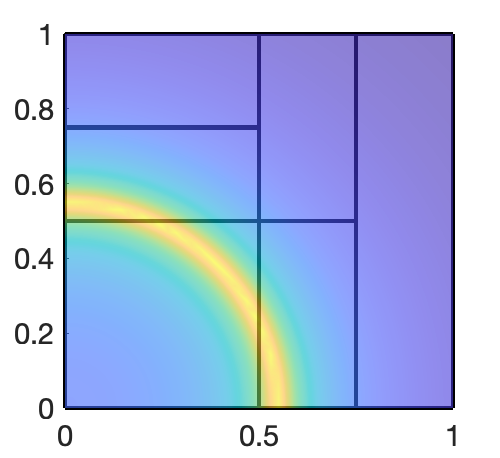}
}
~
\subfigure[
\ref{eq:prob_Shields}: $\delta=0.1$; $|\mathcal{S}|=20$]
{\includegraphics[width=0.31\textwidth]{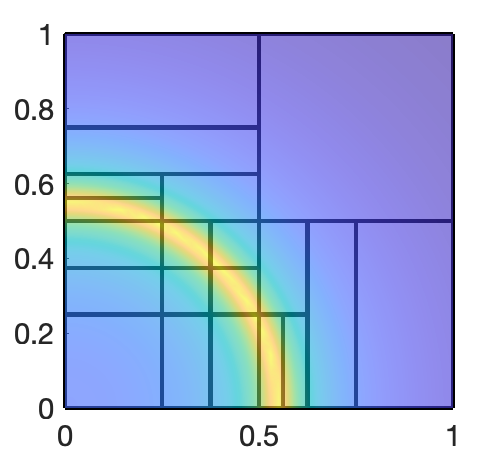}
}
~
\subfigure[
\ref{eq:prob_Shields}: $\delta=0.1$; $|\mathcal{S}|=63$.]
{\includegraphics[width=0.31\textwidth]{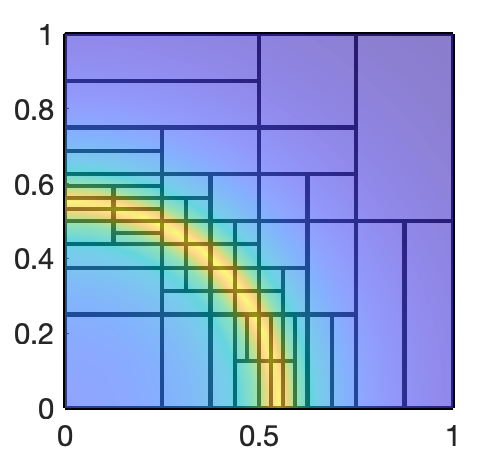}
}
\subfigure[
\ref{eq:prob_Shields}: $\delta=0.01$; $|\mathcal{S}|=6$.]
{\includegraphics[width=0.31\textwidth]{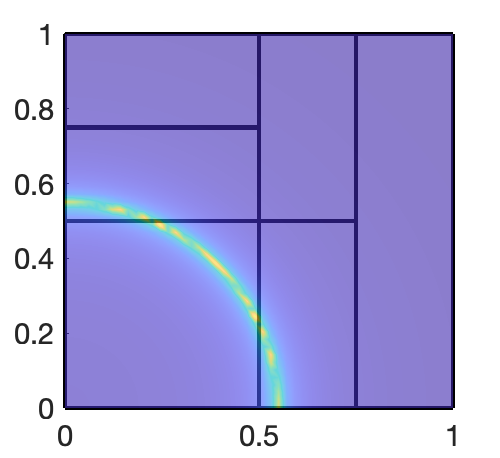}
}
~
\subfigure[
\ref{eq:prob_Shields}: $\delta=0.01$; $|\mathcal{S}|=20$]
{\includegraphics[width=0.31\textwidth]{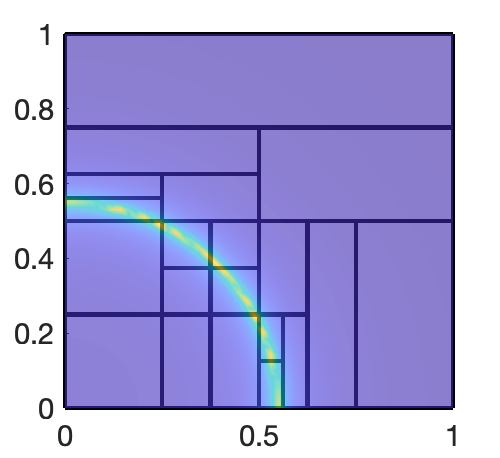}
}
~
\subfigure[
\ref{eq:prob_Shields}: $\delta=0.01$; $|\mathcal{S}|=63$.]
{\includegraphics[width=0.31\textwidth]{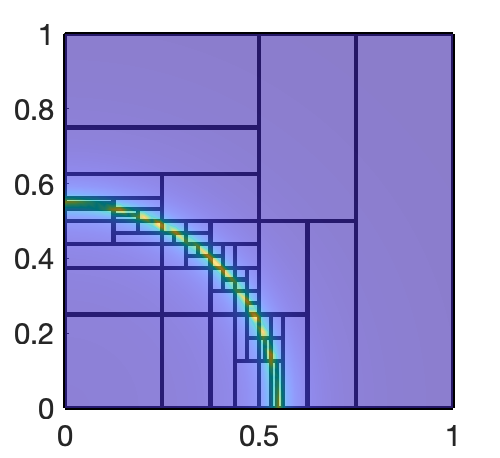}
}
\caption{Sequential stratifications for the 2D test cases: P1. }
\label{fig:stratifications_P1}
\end{figure}
\begin{figure}[htbp]
\centering
\subfigure[
\ref{eq:prob_qcircle}: $d'=d=2$; $|\mathcal{S}|=6$.]
{\includegraphics[width=0.31\textwidth]{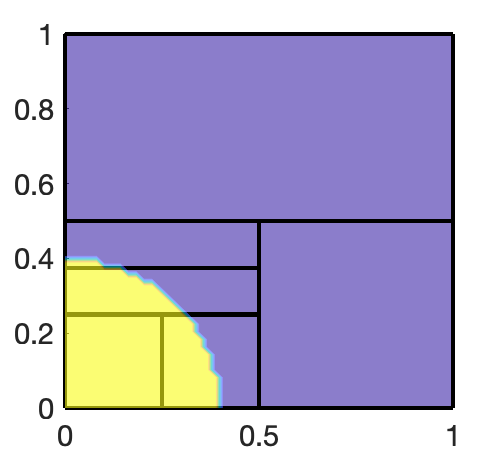}
}
~
\subfigure[
\ref{eq:prob_qcircle}: $d'=d=2$; $|\mathcal{S}|=20$]
{\includegraphics[width=0.31\textwidth]{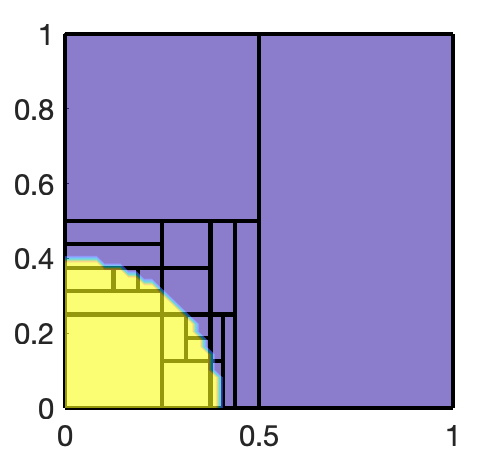}
}
~
\subfigure[
\ref{eq:prob_qcircle}: $d'=d=2$; $|\mathcal{S}|=63$]
{\includegraphics[width=0.31\textwidth]{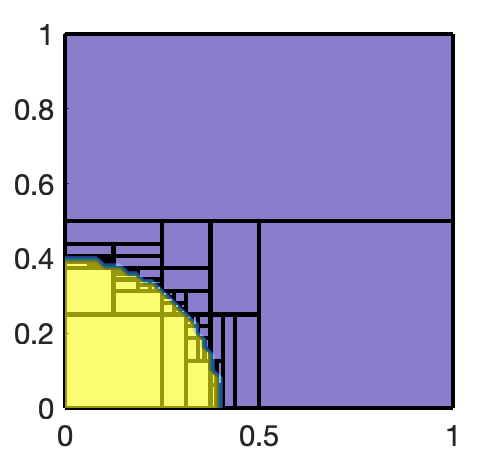}
}
\caption{Sequential stratifications for the 2D test cases: P2. }
\label{fig:stratifications_P2}
\end{figure}

For all test cases, the proposed SS-LHS-gPC is compared to standard LHS and SMC with the same number of samples. To approximate the 
variance of the estimators, each numerical experiment is evaluated independently 100 times, and the sample variances are computed. The results are shown as functions of the total number of samples in Figs.~\ref{fig:var_vs_samples_P1}--\ref{fig:var_vs_samples_P3}, together with $\calO(N^{-2})$ reference curves. For the case~\ref{eq:prob_Shields} shown in Fig.~\ref{fig:var_vs_samples_P1}, all three methods (SS-LHS-gPC, LHS, SMC) exhibit increased variance with decreasing $\delta$, and the difference between LHS and SMC decreases. SS-LHS-gPC shows an $\calO(N^{-2})$ variance decay for this test case (irrespective of $\delta$), leading to two orders of magnitude smaller variance than LHS for the largest sample sizes considered in this study.

\begin{figure}[htbp]
\centering
\subfigure[
\ref{eq:prob_Shields} with $\delta=1$.]
{\includegraphics[width=0.31\textwidth]{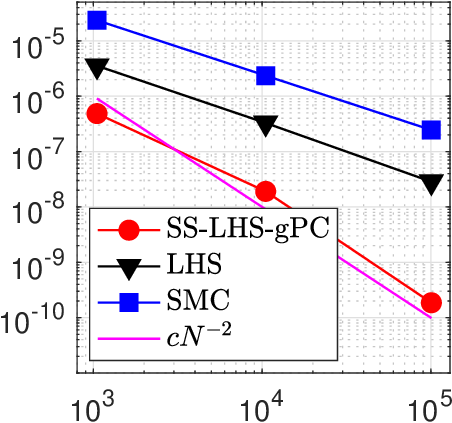}
}
~
\subfigure[
\ref{eq:prob_Shields} with $\delta=0.1$.]
{\includegraphics[width=0.31\textwidth]{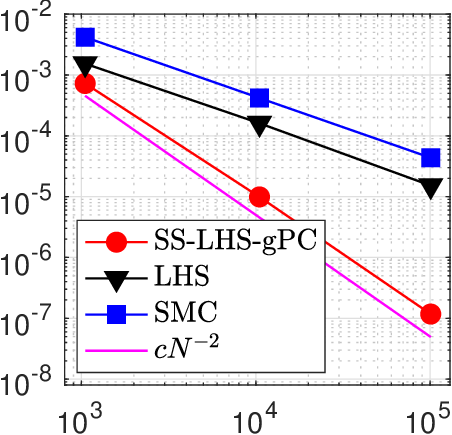}
}
~
\subfigure[
\ref{eq:prob_Shields} with $\delta=0.01$.]
{\includegraphics[width=0.31\textwidth]{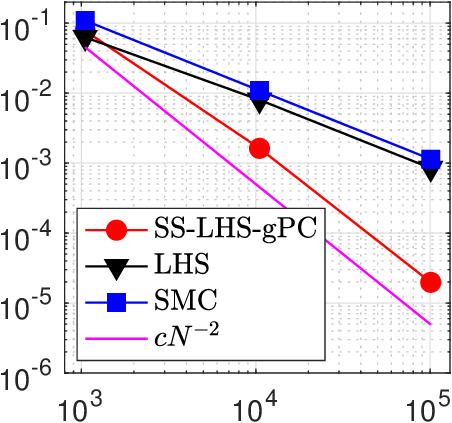}
}
\caption{P1: Variance of sequential stratifications, LHS and SMC as functions of number of samples.  
}
\label{fig:var_vs_samples_P1}
\end{figure}

The first row of Fig.~\ref{fig:var_vs_samples_P2} shows the 2D quarter-circle problem~\ref{eq:prob_qcircle} embedded in 2D, 3D, and 10D random spaces, respectively. The effective dimension (and the variance of the test function) is $d_{\text{sup}}=d_{\text{tr}}=2$, and all three methods perform almost independently of the total number of dimensions. An $\calO(N^{-2})$ variance decay for SS-LHS-gPC is observed also for this set of test cases.
The second row of Fig.~\ref{fig:var_vs_samples_P2} shows an effectively 3D version of problem~\ref{eq:prob_qcircle}, now in spaces of dimension 3, 4, and 10. LHS and SMC perform similarly to the preceding problem. Again, SS-LHS-gPC provides smaller variances, but no longer has an $\calO(N^{-2})$ variance decay, although the plots indicate that larger sample sizes possibly lead to higher variance decay rates.

\begin{figure}[htbp]
\centering
\subfigure[
\ref{eq:prob_qcircle} with $d'=2$, $d=2$.]
{\includegraphics[width=0.31\textwidth]{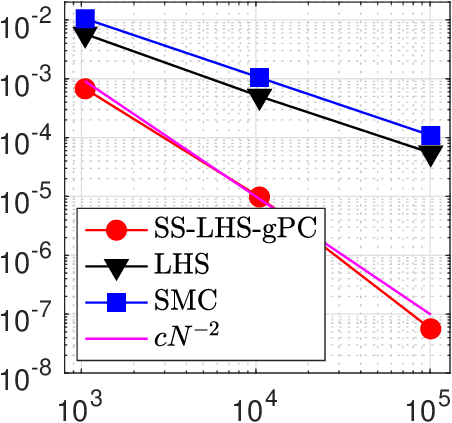}
}
~
\subfigure[
\ref{eq:prob_qcircle} with $d'=2$, $d=3$.]
{\includegraphics[width=0.31\textwidth]{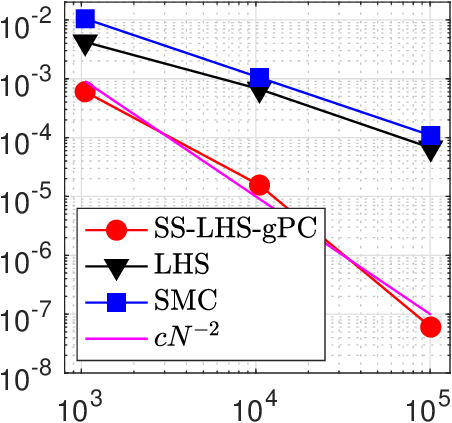}
}
~
\subfigure[
\ref{eq:prob_qcircle} with $d'=2$, $d=10$.]
{\includegraphics[width=0.31\textwidth]{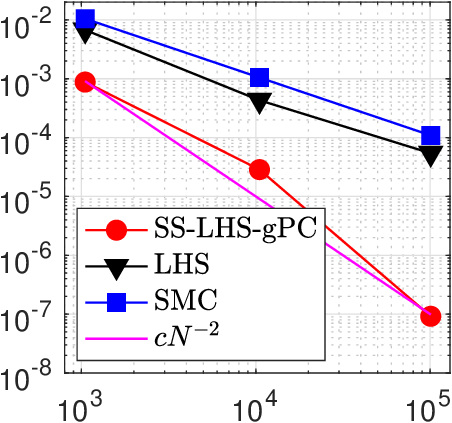}
}
~
\subfigure[
\ref{eq:prob_qcircle} with $d'=3$, $d=3$.]
{\includegraphics[width=0.31\textwidth]{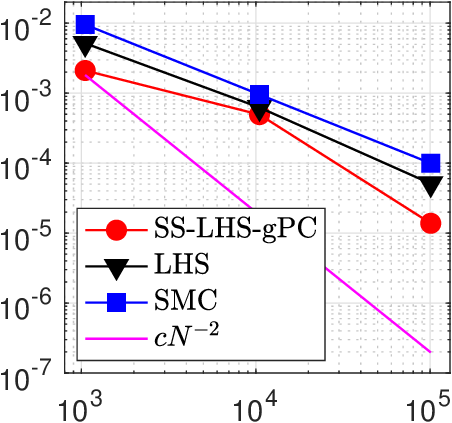}
}
~
\subfigure[
\ref{eq:prob_qcircle} with $d'=3$, $d=4$.]
{\includegraphics[width=0.31\textwidth]{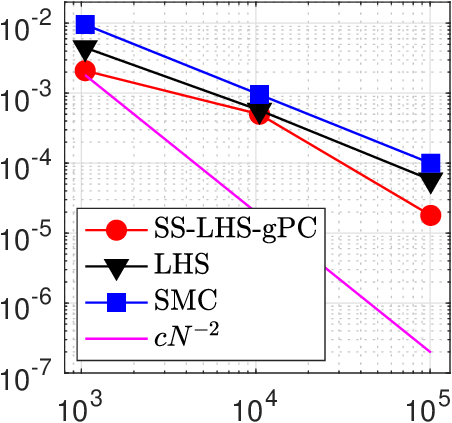}
}
~
\subfigure[
\ref{eq:prob_qcircle} with $d'=3$, $d=10$.]
{\includegraphics[width=0.31\textwidth]{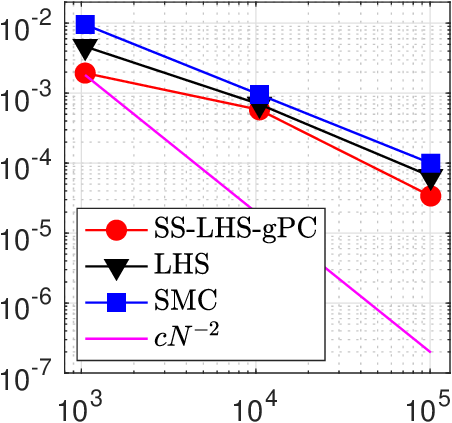}
}
\caption{P2: Variance of sequential stratifications, LHS and SMC as functions of number of samples.
}
\label{fig:var_vs_samples_P2}
\end{figure}

Figure~\ref{fig:var_vs_samples_P3} depicts problem~\ref{eq:prob_2circles} with two 2D objects in an effectively (in the truncation sense) 4D space. Again, we consider the problem to be embedded in a space of increasingly higher dimensionality. The higher variance of the functions themselves leads to slightly higher error constants for SMC and LHS. For the case with $d=10$, the performance of SS-LHS-gPC is between that of the~\ref{eq:prob_qcircle} cases with $d'=2$ and $d'=3$.  This can be attributed to the~\ref{eq:prob_2circles} case having lower effective dimension in the superposition sense than the~\ref{eq:prob_qcircle} with $d'=3$. The observed variance decay of~\ref{eq:prob_2circles} is faster than $\calO(N^{-1})$ but slower than $\calO(N^{-2})$.

\begin{figure}[htbp]
\centering
\subfigure[
\ref{eq:prob_2circles}with $d'=2$, $d=4$.]
{\includegraphics[width=0.31\textwidth]{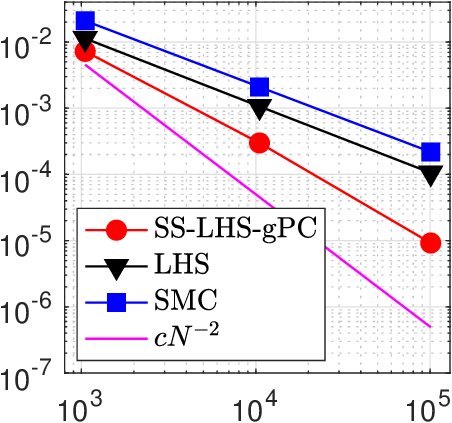}
}
~
\subfigure[
\ref{eq:prob_2circles} with $d'=2$, $d=5$.]
{\includegraphics[width=0.31\textwidth]{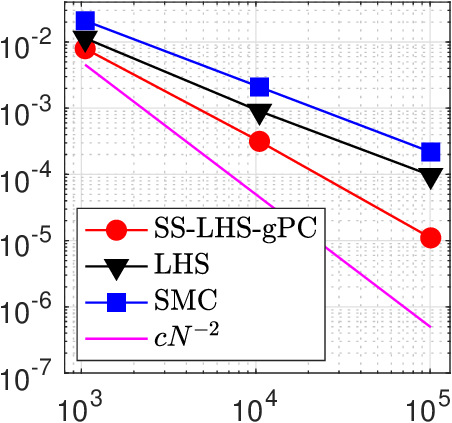}
}
~
\subfigure[
\ref{eq:prob_2circles} with $d'=2$, $d=10$.]
{\includegraphics[width=0.31\textwidth]{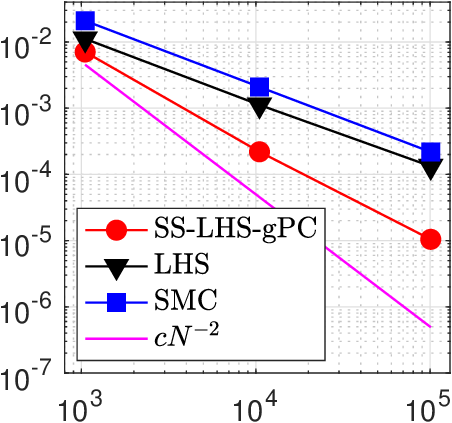}
}
\caption{P3: Variance of sequential stratifications, LHS and SMC as functions of number of samples.}
\label{fig:var_vs_samples_P3}
\end{figure}

\section{Conclusion}
\label{sec:conclusion}
We have proposed a novel sampling-based estimation method for computing statistics of random system outputs. The SS-LHS-gPC method is based on optimal weighting of a hierarchical sequence of stratified sampling estimators, where Latin hypercube sampling is employed in each stratum.
The sequence of estimators is refined through a stratum splitting criterion to reduce the maximum local variance contribution, estimated from generalized polynomial chaos expansions and Sobol sensitivity indices. Numerical results indicate that SS-LHS-gPC consistently outperforms classic Monte Carlo estimators and LHS estimators by providing estimates with smaller variance. In fact, the novel estimation procedure can even achieve a variance of
order $N^{-2}$ in some cases, where $N$ denotes the sample size. Further work is required to rigorously assess the variance reduction properties.
While this work's focus was on computational models that are significantly more expensive to evaluate than the additional post-processing incurred by Algorithm~\ref{algo:repeated_ss}, the presented methodology is also applicable in other cases. We note, however, that for applications where the computational model is cheap to evaluate, a performance comparison that goes beyond a method's variance may be required. Indeed, a performance measure that additionally accounts for the total work required, such as a method's work-normalized (relative) variance, may be more suitable.
Finally, an implementation of the method presented here is available at \url{https://github.com/massperp/Sequential-Stratified-LHS}.

\section*{Acknowledgements}
\vspace*{-0.5ex}
The authors are grateful to the anonymous reviewer for the insightful and constructive feedback that helped improve the manuscript. This work was partly funded by the Research Council of Norway through the project Expansion of Resources for CO$_2$ Storage on the Horda Platform (ExpReCCS) under project number 336294. 


\end{document}